\newtheorem{remark}{Remark}
\newtheorem{problem}{Problem}
\newtheorem{proposition}{Proposition}
\newcommand{\frameworkname}{MA-COPP}
\newcommand{\sequence}[3]{#1_{#2\ldots#3}}
\newcommand{\ftenth}[0]{F1TENTH }
\begin{document}

\title{Conformal Off-Policy Prediction for Multi-Agent Systems}

\author{{Tom Kuipers$^\Asterisk$, Renukanandan Tumu$^\Asterisk$, Shuo Yang, Milad Kazemi, Rahul Mangharam, and Nicola Paoletti}
\thanks{T. Kuipers, M. Kazemi and N. Paoletti are with the Department of Informatics, King's College London, UK. Email: {\tt\small\{tom.kuipers, milad.kazemi, nicola.paoletti\}@kcl.ac.uk}}
\thanks{R. Tumu, S. Yang and R. Mangharam are with the Department of Electrical and Systems Engineering, University of Pennsylvania, USA. Email: {\tt\small\{nandant, yangs1, rahulm\}@seas.upenn.edu}}
\thanks{$^\Asterisk$Authors contributed equally.}
}

\maketitle

\begin{abstract}
Off-Policy Prediction (OPP), i.e., predicting the outcomes of a target policy using only data collected under a nominal (behavioural) policy, is a paramount problem in data-driven analysis of safety-critical systems where the deployment of a new policy may be unsafe. 
To achieve dependable off-policy predictions, recent work on {Conformal Off-Policy Prediction (COPP)} leverage the conformal prediction framework to derive prediction regions with probabilistic guarantees under the target process. Existing COPP methods can account for the distribution shifts induced by policy switching, but are limited to single-agent systems and scalar outcomes (e.g., rewards). 
In this work, we introduce \textit{MA-COPP}, the first conformal prediction method to solve OPP problems involving multi-agent systems, deriving joint prediction regions for all agents' trajectories when one or more ``ego'' agents change their policies. 
Unlike the single-agent scenario, this setting introduces higher complexity as the distribution shifts affect predictions for all agents, not just the ego agents, and the prediction task involves full multi-dimensional trajectories, not just reward values. 
A key contribution of {MA-COPP} is to avoid enumeration or exhaustive search of the output space of agent trajectories, which is instead required by existing COPP methods to construct the prediction region. We achieve this by showing that an over-approximation of the true \textit{joint prediction region} (JPR) can be constructed, without enumeration, from the maximum density ratio of the JPR trajectories. 
We evaluate the effectiveness of MA-COPP in multi-agent systems from the PettingZoo library and the \ftenth autonomous racing environment, achieving nominal coverage in higher dimensions and various shift settings. 
\end{abstract}
\section{Introduction}

In reinforcement learning, off-policy prediction (OPP)~\cite{uehara2022review} is the problem of predicting some outcome (e.g., reward, performance) of a given policy—the \textit{target policy}—using only data collected under a different policy—the \textit{behavioural policy}. Such a problem is relevant in data-driven analysis of cyber-physical systems, wherein we leverage observations/executions of the system rather than a mechanistic model, where such a model may be unavailable or just unreliable. OPP is motivated by safety-critical applications such as robotics~\cite{levine2020offline} and healthcare~\cite{murphy2001marginal}, 
where evaluating the target policy on the real system may be too risky or unethical. 

A na\"ive approach to solving the OPP problem consists of learning an (input-conditional) model of the system from behavioural data and plugging the target policy into the learned model~\cite{le2019batch}. However, such an approach does not take into account a fundamental issue of OPP: switching policies \textit{induces a distribution shift}, and so, the model inferred under the behavioural distribution cannot offer, in general, reliable predictions under the target distribution. 

In this paper, we focus on the OPP problem in systems comprising multiple interacting agents that evolve over time according to stochastic policies and stochastic dynamics.
Here, we deal with a situation where one or more \textit{ego} agents change their policies. 
This setting is considerably more challenging than the single-agent case because the distribution shift involves the predictions of all agents, not just the ego agents: 
even if the other agents remain with their behavioural (observational) policies, their actions will change in response to the shift in the ego agents' behaviours. 

To obtain reliable predictions, we leverage the framework of \textit{conformal prediction (CP)}~\cite{vovk2005algorithmic,angelopoulos2021gentle}, a technique to derive prediction regions guaranteed to cover the true (unknown) output with arbitrary probability. 
Crucially, these probabilistic guarantees are finite-sample (non-asymptotic) and distribution-free, i.e., they don't rely on priors or parametric assumptions about the data distribution. For these properties, it is unsurprising that CP has become, in recent years, the go-to method for uncertainty quantification, especially in safety-critical applications, see,~\cite{bortolussi2019neural, cairoli2023conformal, yu2023signal, yang2023safe}. 

CP uses a set of calibration points to derive a distribution of model residuals, a.k.a. \textit{scores}. The prediction region for a test point is then obtained by including all outputs whose scores appear sufficiently likely w.r.t.\ such calibration distribution. 
This procedure yields the desired probabilistic guarantees as long as the calibration and test data are exchangeable (a weaker assumption than i.i.d.). Exchangeability, however, is violated in the presence of distribution shifts, which are inherent to OPP. 
The framework of \textit{weighted exchangeability}~\cite{tibshirani2019conformal} extends CP to handle distribution shifts by reweighting the calibration points ``as if'' they were sampled under the target distribution. This is achieved by using estimates of the density ratios (DRs) between target and behavioural distributions. 
Recent works on \textit{Conformal Off-Policy Prediction (COPP)}~\cite{taufiq_conformal_2022,foffano2023conformal,zhang_conformal_2023} leverage CP and weighted exchangeability to provide valid prediction regions for OPP problems. These methods, however, only support 
single-agent systems and construct regions for scalar outcomes (e.g., reward).

In this paper, we present \textit{MA-COPP}, the first conformal prediction algorithm to solve OPP problems involving multi-agent systems. Crucially, our approach derives joint prediction regions (JPRs, akin to reach-tubes) for the future (multi-dimensional) trajectories of all agents, making it more comprehensive than existing COPP methods which are limited to scalar outcomes. 

Indeed, constructing a valid JPR under generic distribution shifts using CP and weighted exchangeability normally requires, for each test input, an exhaustive search over the output trajectory space, which is infeasible in our high-dimensional settings. To overcome this problem, we demonstrate that we can derive a conservative over-approximation of the true (unknown) JPR without resorting to exhaustive search, if the maximum density ratio (DR) over all the JPR's trajectories is known. Building on this result, in MA-COPP, we pivot the search task over the maximum DR, which is significantly more effective than an exhaustive output search.

We evaluate MA-COPP on two multi-agent case studies: a multi-particle environment from the PettingZoo library for multi-agent reinforcement learning problems~\cite{terry2021pettingzoo}, and a continuous control racing model from the F1TENTH autonomous racing environment \cite{okelly_f1tenth_2020}. We find that our approach consistently achieves (close to) target coverage under a variety of distribution shifts and for output spaces of up to $72$ dimensions, in settings where standard CP (which disregards the distribution shift) results in a coverage drop of up to $20\%$. Moreover, we find that MA-COPP yields over-conservative regions only for a very small proportion of instances.

We provide a practical example of the regions produced by the MA-COPP method using real data in Figure \ref{fig:region-comparison}, where the effect of the miscoverage error of standard CP can be clearly seen. In contrast, the MA-COPP method produces a JPR which sufficiently covers the target trajectories.

\begin{figure}
    \centering
    \includegraphics[width=0.85\linewidth]{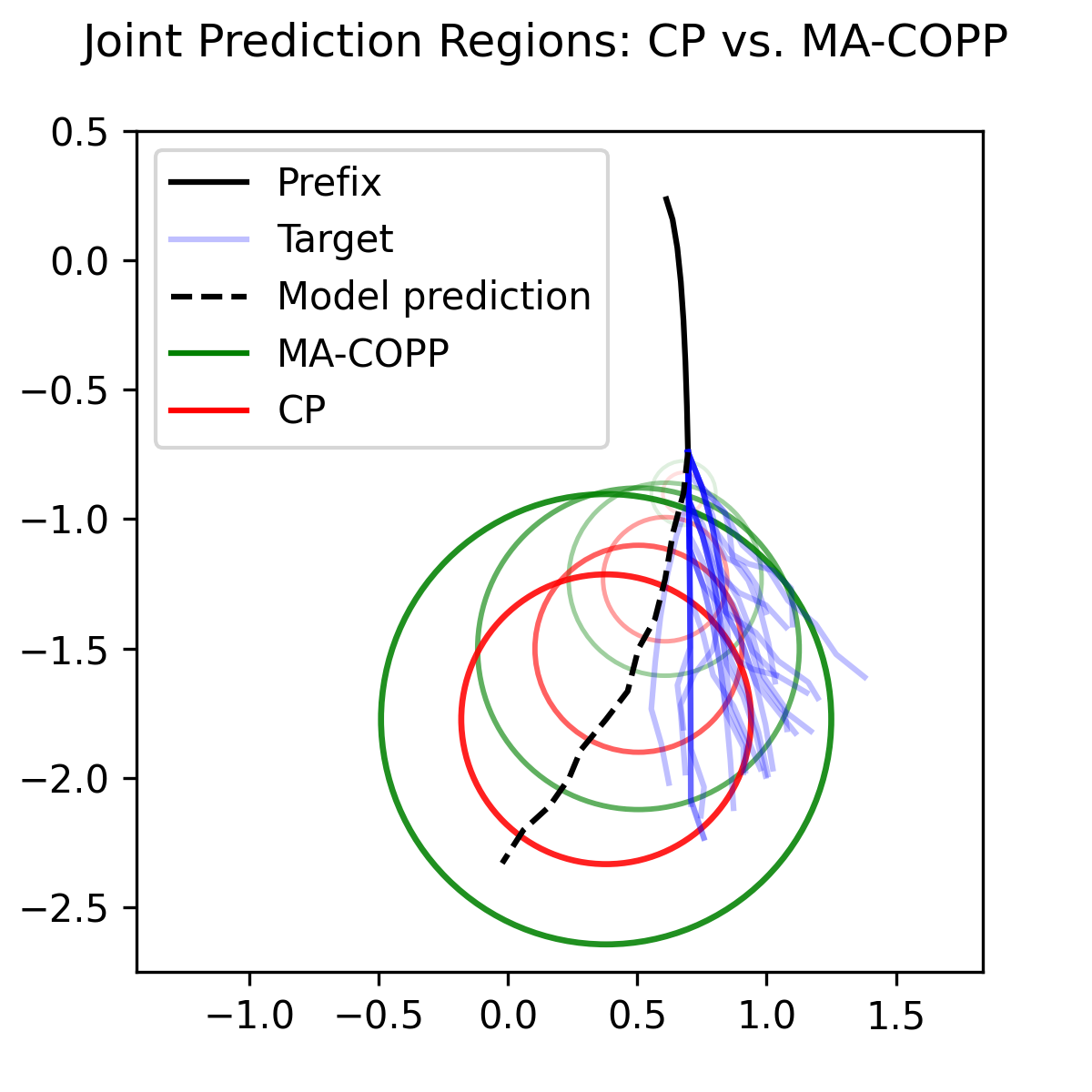}
    \caption{2D visualisation of actual JPRs constructed over the position of a single ego agent in the MPE environment defined in §~\ref{sec:mpe-results}} 
    \label{fig:region-comparison}
\end{figure}
\section{The Multi-Agent OPP Problem}

\paragraph{Notation} We use capital letters to denote random variables, lowercase for concrete values of those variables, and bold letters for the corresponding vectors (of random variables or concrete values) for all agents.  Also, we will often use the notation $\sequence{y}{1}{T}$ as a shortcut for the sequence $(y_{1},\ldots,y_{T})$.

\paragraph{Behavioural process} We consider a multi-agent system consisting of $K$ agents and described, for each agent $k=1,\ldots,K$ and time $t=1,\ldots,T$, by the following process:
\begin{equation}\label{eq:process_new}
\begin{split}
     \mathbf{X}_1 \sim & P_{init}(\cdot); \\
     A_{k,t} \sim & \pi^{b}_{k}(\cdot \mid \mathbf{X}_t); \\
      X_{k,t+1} \sim & P_k(\cdot \mid \mathbf{X}_t, A_{k,t})
\end{split}
\end{equation}

\noindent where $X_{k,t}\in \mathbb{R}^{n}$ is the agent's state, $\mathbf{X}_t=(X_{k,t})_{k=1}^K \in \mathbb{R}^{n\times K}$ is the environment/global state;  $A_{k,t} \in \mathcal{A}_k$ is the performed action, and $\mathcal{A}_k$ is the discrete or continuous action space of the agent; 
$P_{init}: \mathbb{R}^{n\times K} \rightarrow [0,1]$ is the distribution of initial environment states;
$\pi_{k}^{b}:(\mathbb{R}^{n\times K} \times \mathcal{A}_k) \rightarrow [0,1]$ is the stochastic behavioural policy of the agent; 
and $P_k: (\mathbb{R}^{n\times K} \times \mathcal{A}_k \times \mathbb{R}^{n})\rightarrow [0,1]$ is the transition probability function determining the distribution of the next agent's states.
  
Each agent has access to the global environment state but does not know the actions of the other agents. Furthermore, we assume a subset of \textit{ego agents} $E \subseteq \{1,\ldots, K\}$ as the agents that are under our direct control and allowed to switch policies in the target process.

In our OPP settings, we do not have access to the process defined in~\eqref{eq:process_new}, that is, we do not know the initial probabilities $P_{init}$, the transition probabilities $P_{k}$ and the non-ego behavioural policies $\{\pi_{k}^{b}\}_{k\not\in E}$. We only know the ego behavioural policy $\{\pi_{e}^{b}\}_{e\in E}$ and have access to the observational data defined below. For simplicity, we will now refer to terms pertaining to ego agents with subscript $e$ and the non-ego agents with subscript $k$.

\paragraph{Observational data} We can observe a set of $N$ global state trajectories of length $T$, and for ego agents only, we can also observe the actions they perform at each step of the trajectory. More formally, we have access to the following dataset of 
realisations of the process $\left(\mathbf{X}_1, \left(A_{e,1}\right)_{e\in E}, \mathbf{X}_2, \left(A_{e,2}\right)_{e\in E}, \ldots,\mathbf{X}_T\right)$ induced by~\eqref{eq:process_new}:

\begin{equation}\label{eq:trajectories_new}
    \mathcal{D}=\left\{ \left(  \mathbf{x}^{(i)}_{t},  \left(a^{(i)}_{e,t}\right)_{e\in E} \right)_{t=1}^T \right\}_{i=1}^N.
\end{equation}

\paragraph{Problem definition} 

We focus on the problem of deriving a \textit{joint prediction region} for the future agent trajectories $\sequence{\mathbf{X}}{H+1}{T}$, given a prefix $\sequence{\mathbf{X}}{1}{H}$, where $H$ is the prefix length and $T>H$ is the total trajectory length. 
In particular, this problem is one of \textit{off-policy prediction}: using observational data only (i.e., realisations of the behavioural process), we seek to construct a prediction region for $\sequence{\mathbf{X}}{H+1}{T}$ under a target policy different from the behavioural one. We consider the case where the policies of the ego agents $e\in E$ change after time $H$ and the policies of all the other agents remain fixed, described by the \textit{target process} in \eqref{eq:process_target}, where only $A_{k,t}$ differs from the behavioural process.

\begin{equation}\label{eq:process_target}
\begin{split}
    & A_{k,t} \sim \pi_{k,t}(\cdot \mid \mathbf{X}_t);
\end{split}
\end{equation}

where all non-ego agents always follow their behavioural policy ($\pi_{k,t} = \pi^b_k$ if $k\notin E$) and each ego agent initially follows their behavioural policy ($\pi_{k,t} = \pi^b_k$ if $t\leq H$) and switches their policy to a \emph{known} target policy thereafter ($\pi_{e,t} = \pi^*_e$ if $t>H$). 
Below, we denote with $P^{\pi^b}$ and $P^{\pi^*}$ the distributions  under the behavioural process~\eqref{eq:process_new} and the target process~\eqref{eq:process_target}, respectively.

\begin{problem}[Multi-Agent Off-Policy Prediction]\label{prob:ma-copp}

Given observations $\mathcal{D}\sim_{iid} P^{\pi^b}$ of length $T$ and of the form~\eqref{eq:trajectories_new}
generated by the behavioural process~\eqref{eq:process_new}, construct
a \emph{joint prediction region (JPR)} $C_{\alpha}(\cdot)$ with coverage $1-\alpha$ for i.i.d.\ test trajectories $\mathbf{X}_{1\ldots T}\sim P^{\pi^*}$, where $\alpha$ is the desired miscoverage rate.
Formally, the JPR must satisfy the following:

\begin{equation}
\mathbb{P}_{\mathbf{X}_{1\ldots T}\sim P^{\pi^*}}\left(\mathbf{X}_{H+1\ldots T}\in C_{\alpha}(\mathbf{X}_{1\ldots H})\right)\geq 1-\alpha.
\end{equation}

where $\mathbf{X}_{1\ldots T}\sim P^{\pi^*}$ is generated according to the target process~\eqref{eq:process_target}.
\end{problem}

\begin{remark}
    Although process~\eqref{eq:process_new} is Markovian, our predictions consider, in input, a sequence of past states to accommodate models (e.g., autoregressive, recurrent) that make use of a sequence of past states.
\end{remark}

\section{Background}\label{sec:background}

\noindent\emph{Conformal Prediction:} an uncertainty quantification framework that can be applied on top of any supervised learning task for constructing distribution-free prediction regions with guaranteed marginal coverage.  
We now introduce the method using a standard regression example. Starting from  a dataset  $\mathcal{D} = \{ (x^{(i)}, y^{(i)}) \}_i$ of $(x, y)$ pairs sampled i.i.d.\ from an unknown distribution $P_{X,Y}$. 
CP performs the following steps:

\begin{enumerate}
    \item Split $\mathcal{D}$ into disjoint training and calibration datasets, $\mathcal{D}_{t}$ and $\mathcal{D}_{c}$, with $|\mathcal{D}_{c}|$
    being the number of samples in $\mathcal{D}_{c}$;
    \item Train a predictor $\hat{T}:X\rightarrow Y$ using $\mathcal{D}_{t}$;
    \item Define a \textit{non-conformity score function} $S: X \times Y \rightarrow \mathbb{R}$, such that $S(x,y)$ quantifies the discrepancy/residual between $y$ and the prediction $\hat{T}(x)$;
    \item Use $\mathcal{D}_{c}$ to define the calibration distribution
    \begin{equation}
        \widehat{F} = \sum_{i=1}^{|\mathcal{D}_{c}|} \dfrac{1}{|\mathcal{D}_{c}|+1} \delta_{s_i} + \dfrac{1}{|\mathcal{D}_{c}|+1}\delta_{\infty},
    \end{equation}
    where $\delta_{s}$ is the Dirac distribution with parameter $s$,  $s_i=S(x^{(i)}, y^{(i)})$ is the score of the $i$-th calibration point, and $\delta_{\infty}$ represents the unknown score of the test point\footnote{Since we do not know the true output for the test input, a worst-case score of $\infty$ is assumed.};
    \item For a given test point $x$ and failure rate $\alpha$, construct the prediction region as $C_{\alpha}(x)=\{y: S(x, y)\le Q_{1-\alpha}(\widehat{F})\}$, where $Q_{1-\alpha}(\widehat{F})$ is the $1-\alpha$ quantile of $\widehat{F}$. Such a prediction region satisfies the following coverage guarantee w.r.t.\ unseen test data $(x,y)\sim P_{X,Y}$:
    \begin{equation}\label{eq:standard_cp}
        \mathbb{P}_{(x,y)\sim P_{X,Y}}(y\in C_{\alpha}(x)) \ge 1-\alpha.
    \end{equation}
    Note that the above holds in the more general case when $(x,y)$ is exchangeable w.r.t.\ calibration data, i.e., when the joint probability of $(x^{1},y^{1}),\ldots,(x^{|\mathcal{D}_{c}|},y^{|\mathcal{D}_{c}|}),(x,y)$ remains the same for any permutation of the data points.
\end{enumerate} 

\begin{remark}\label{rem:implicit_region}
In standard CP for regression, to construct $C_{\alpha}(x)$ it is not required to enumerate and check inclusion for all candidate outputs. For instance, a common choice for $S$ is $S(x,y)=\| y - \hat{T}(x) \|_p$ for some $p\geq 1$, and so $C_{\alpha}(x)$ can be constructed implicitly (without enumeration) as the $L_p$-ball centered at $\hat{T}(x)$ with radius $Q_{1-\alpha}(\widehat{F})$. This is possible because the quantile $Q_{1-\alpha}(\widehat{F})$ is the same for all candidate outputs. 
\end{remark}

\noindent\emph{CP under Distribution Shift:} Standard CP provides marginal coverage guarantees in the case where data observed at test-time is sampled from an exchangeable distribution as that of the calibration set $\mathcal{D}_c$.
Should the distribution of test data differ in this regard (i.e. $\mathcal{D}_{test} \sim P^*_{X,Y}\neq P_{X,Y}$), it induces a distribution shift which violates the exchangeability assumption and with that, the coverage guarantee~\eqref{eq:standard_cp}. The weighted exchangeability notion of~\cite{tibshirani2019conformal} extends CP to deal with such shifts. It does so by reweighting the probabilities of the calibration distribution $\widehat{F}$ by the \textit{density ratio (DR)} $w(x,y) = \mathrm{d}P^*_{X,Y}(x,y)/\mathrm{d}P_{X,Y}(x,y)$: in this way, we transform $\widehat{F}$ as if the scores had been computed over the target distribution. 
For a test point $x$ and candidate output $y$, the reweighted distribution becomes:
\begin{equation}\label{eq:weighted_calib_dist}
\widehat{F}(x,y) = \sum_{i=1}^{|\mathcal{D}_{c}|} \dfrac{w(x^{(i)}, y^{(i)})}{W + w(x,y)} \delta_{s_i}+ \dfrac{w(x,y)}{W + w(x,y)}\delta_{\infty} 
\end{equation}
where $W=\sum_{i=1}^{|\mathcal{D}_{c}|} w(x^{(i)}, y^{(i)})$. We denote with $p_i(x,y)$ the probability of the $i$-th calibration point reweighted as above. Note that each score $s_i$ has a higher (lower) probability if the target distribution makes $(x^{(i)}, y^{(i)})$ more (less) likely. 
By using $\widehat{F}(x,y)$ to construct $C_{\alpha}(x)$, we recover guarantee~\eqref{eq:standard_cp} for when  $(x,y)\sim P^*_{X,Y}$\cite[Theorem~2]{tibshirani2019conformal}.

\begin{remark}\label{rem:enumeration}
    The reweighted calibration distribution depends on the test point $(x,y)$ because the probability $p_{|{\mathcal{D}_c|}+1}$ of the test score needs to be reweighted by $w(x,y)$. This implies that, to construct a prediction region $C_{\alpha}(x)$ for a given test input $x$, we need to reweight $\widehat{F}$ for every candidate output $y$ to determine if $y\in C_{\alpha}(x)$ by checking $S(x,y)\leq Q_{1-\alpha}(\widehat{F}(x,y))$. Thus, for general shifts, $ C_{\alpha}(x)$ needs to be constructed by \emph{enumerating} (and checking) individual candidates $y$, because the quantile $Q_{1-\alpha}(\widehat{F}(x,y))$ changes with $y$, i.e., the region cannot be constructed implicitly as per Remark~\ref{rem:implicit_region}. This is not the case, however, with covariate shift, whereby $P_X$ changes but $P_{Y\mid X}$ does not: the DR now depends only on the input, and so $\widehat{F}(x,y)$ remains the same for every $y$.
\end{remark}

\noindent\emph{Conformal Off-Policy Prediction:}
The work of~\cite{taufiq_conformal_2022} builds on weighted exchangeability for conformal off-policy prediction (COPP) in contextual bandits—a special (and simpler) case of our problem, restricted to one step (i.e., $T=2$ and $H=1$), one agent ($K=1$), and scalar outcomes (as opposed to our multi-dimensional JPRs). 
In this setting, changing the behavioural policy $\pi^{b}$ into the target policy $\pi^{*}$ induces a shift in the distribution of $Y \mid X$, while the distribution of $X$ stays the same. Hence, the DR can be derived as follows:

\begin{equation}\label{eq:copp_DR}
\begin{split}
    w(x,y) &= 
    \dfrac{\mathrm{d}P^{\pi^{*}}_{X,Y}(x,y)}{\mathrm{d}P^{\pi^{b}}_{X,Y}(x,y)} = \dfrac{\mathrm{d}P^{\pi^{*}}_{Y\mid X}(x,y) \cdot \mathrm{d}P^{\pi^{*}}_{X}(x,y)}{\mathrm{d}P^{\pi^{b}}_{Y\mid X}(x,y)\cdot \mathrm{d}P^{\pi^{b}}_{X}(x,y)} \\
    &= \dfrac{\mathrm{d}P^{\pi^{*}}_{Y\mid X}(x,y)}{\mathrm{d}P^{\pi^{b}}_{Y\mid X}(x,y)}
    = \dfrac{\int P(y \mid a, x) \cdot \pi^*(a \mid x) \mathrm{d}a}{\int P(y \mid a, x) \cdot \pi^b(a \mid x) \mathrm{d}a}
\end{split}
\end{equation}

Since the transition probabilities $P$ are unknown, the COPP approach derives an estimation $\hat{w}$ of $w$ by plugging in~\eqref{eq:copp_DR}  a data-driven surrogate $\hat{P}$ learned from (behavioural) data and by approximating the integrals with Monte-Carlo sampling of the policies.  
To construct $C_{\alpha}(x)$, instead of enumerating all candidate outputs $y$ (required as per Remark~\ref{rem:enumeration}), COPP implements an exhaustive grid search over the output space and returns the interval closure of all $y$s that pass the CP inclusion test. This operation is costly but remains feasible because, in the COPP method, $y$ is a scalar. 

Using an estimation of the DR has limitations, in that it directly affects the accuracy of the reweighted distribution~\eqref{eq:weighted_calib_dist} and consequently the validity guarantees. In \cite{lei2021conformal}, the authors show that the miscoverage gap induced by using $\hat{w}$ instead of $w$ is bounded by $\frac{1}{2}\mathbb{E}_{X,Y\sim P^{\pi^{b}}_{X,Y}}|w(X,Y)-\hat{w}(X,Y)|$. 
However, the deviation $|w(X,Y)-\hat{w}(X,Y)|$ and the resulting miscoverage error are inevitably exacerbated when considering, like in our settings, sequential processes involving multiple steps. 
\section{The \frameworkname\ approach}

The \frameworkname\ method provides a solution to Problem~\ref{prob:ma-copp} by constructing valid joint prediction regions (JPRs) for the agents' future trajectories under the target policy despite only having access to observational data under the behavioural policy. To do so, we extend weighted exchangeability and the COPP method to deal with high-dimensional output spaces and JPRs arising from multiple steps and multiple agents. 

An overview of the algorithm is presented in Figure~\ref{fig:method-overview}.

\begin{figure*}
    \centering
    \vspace{10pt}
    \includegraphics[width=.85\textwidth]{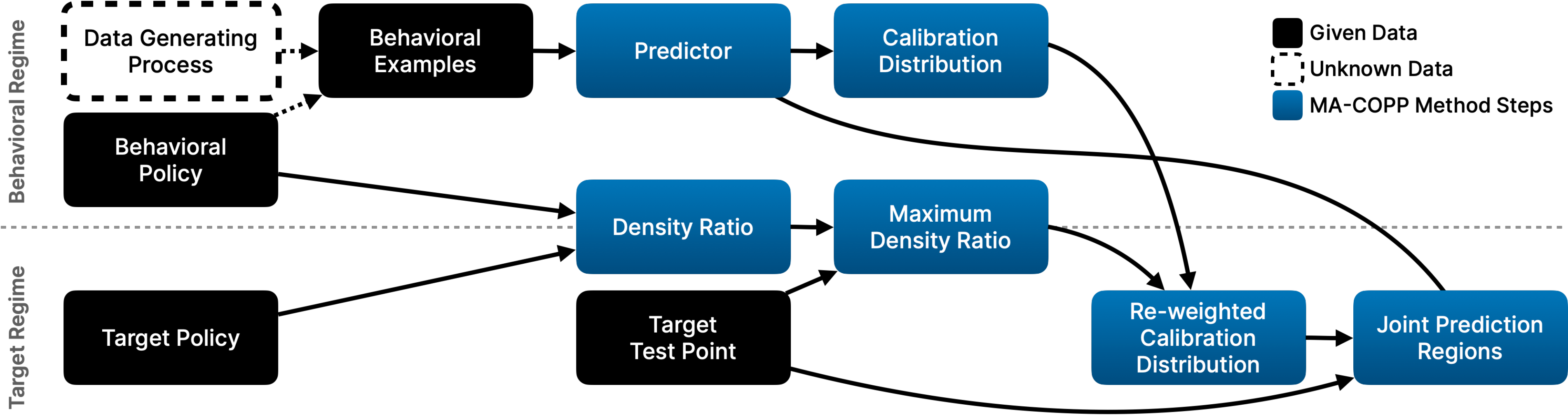}
    \caption{Overview of the MA-COPP method. A calibration distribution is first derived from behavioural data and a predictor (see §~\ref{sec:non-conformity-score}). Density ratios are computed as described in §~\ref{sec:density-ratio}. To construct the JPR for a given test point, we estimate the maximum DR over all the outputs that pass the CP test, and use this estimate to reweight the calibration distribution, see §~\ref{sec:construct-pred-regions}.}
    \label{fig:method-overview}
    \vspace{-10pt}
\end{figure*}

\paragraph{Lifting of prediction task} 
We start by slightly reformulating our prediction task. Instead of constructing a JPR $C_{\alpha}$ for sequences of future global states $\sequence{\mathbf{X}}{H+1}{T}$, as per Problem~\ref{prob:ma-copp}, we will construct a JPR $C^+_{\alpha}$  for sequences of future global states \textit{and} ego agents actions $\left(\mathbf{X}_{t}, \left(A_{t,e}\right)_{e \in E}\right)_{t=H+1}^T$. For simplicity, we denote this ``augmented'' sequence by $\sequence{\mathbf{Y}}{H+1}{T}$. 
As explained below, we can trivially use a JPR that is valid for $\sequence{\mathbf{Y}}{H+1}{T}$ to construct a JPR for $\sequence{\mathbf{X}}{H+1}{T}$ (i.e., for the original problem).

\begin{proposition}
For $\alpha\in (0,1)$ and $H<T$, let $C^+_{\alpha}$ be a  JPR valid for $\sequence{\mathbf{Y}}{H+1}{T}$, i.e., such that
$$\mathbb{P}_{(\mathbf{X}_{1\ldots H},\mathbf{Y}_{H+1\ldots T})\sim P^{\pi^*}}\left(\mathbf{Y}_{H+1\ldots T}\in C^+_{\alpha}(\mathbf{X}_{1\ldots H})\right)\geq 1-\alpha.$$ 
Let $C'_{\alpha}(\sequence{\mathbf{X}}{1}{H}) = \Pi_{\sequence{\mathbf{x}}{H+1}{T}}( C^+_{\alpha}(\sequence{\mathbf{X}}{1}{H}))$ be the projection of the JPR onto components $\sequence{\mathbf{x}}{H+1}{T}$. Then, $C'_{\alpha}$ provides a solution to Problem~\ref{prob:ma-copp} in that
$$
\mathbb{P}_{(\sequence{\mathbf{X}}{1}{H},\sequence{\mathbf{X}}{H+1}{T})\sim P^{\pi^*}}\left(\sequence{\mathbf{X}}{H+1}{T}\in C'_{\alpha}(\sequence{\mathbf{X}}{1}{H}) \right)\geq 1-\alpha.
$$
\end{proposition}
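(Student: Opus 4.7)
The plan is to exploit the fact that $C'_\alpha$ is defined as the projection of $C^+_\alpha$ onto the state components of the augmented sequence, and that the state trajectory $\sequence{\mathbf{X}}{H+1}{T}$ is precisely the corresponding subset of components of $\sequence{\mathbf{Y}}{H+1}{T}$. So the argument reduces to a standard fact about projections: if a joint random vector lies in a set, then its projection lies in the projected set. This immediately yields an event-inclusion that lets us transfer the coverage guarantee of $C^+_\alpha$ to $C'_\alpha$ without any distributional reasoning.

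Concretely, first I would unpack the definitions: write $\sequence{\mathbf{Y}}{H+1}{T} = \bigl(\sequence{\mathbf{X}}{H+1}{T},\, (A_{e,t})_{e\in E,\, t=H+1,\ldots,T}\bigr)$ and let $\Pi_{\sequence{\mathbf{x}}{H+1}{T}}$ denote the coordinate projection onto the $\mathbf{X}$-components. By definition of the projection of a set, for any prefix $\sequence{\mathbf{x}}{1}{H}$,
\begin{equation*}
\sequence{\mathbf{Y}}{H+1}{T}\in C^+_\alpha(\sequence{\mathbf{x}}{1}{H}) \;\Longrightarrow\; \sequence{\mathbf{X}}{H+1}{T}\in C'_\alpha(\sequence{\mathbf{x}}{1}{H}).
\end{equation*}
Hence the event $\{\sequence{\mathbf{Y}}{H+1}{T}\in C^+_\alpha(\sequence{\mathbf{X}}{1}{H})\}$ is contained in the event $\{\sequence{\mathbf{X}}{H+1}{T}\in C'_\alpha(\sequence{\mathbf{X}}{1}{H})\}$ under the joint distribution $P^{\pi^*}$.

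Second, I would apply monotonicity of probability to this set inclusion. Taking probabilities on both sides under $P^{\pi^*}$ and using the assumed coverage of $C^+_\alpha$ gives
\begin{equation*}
\mathbb{P}_{P^{\pi^*}}\!\left(\sequence{\mathbf{X}}{H+1}{T}\in C'_\alpha(\sequence{\mathbf{X}}{1}{H})\right) \;\geq\; \mathbb{P}_{P^{\pi^*}}\!\left(\sequence{\mathbf{Y}}{H+1}{T}\in C^+_\alpha(\sequence{\mathbf{X}}{1}{H})\right) \;\geq\; 1-\alpha,
\end{equation*}
which is the desired guarantee. The only subtlety to verify is that the marginal distribution of $(\sequence{\mathbf{X}}{1}{H},\sequence{\mathbf{X}}{H+1}{T})$ under $P^{\pi^*}$ really coincides with the marginal obtained by dropping the action components from $(\sequence{\mathbf{X}}{1}{H},\sequence{\mathbf{Y}}{H+1}{T})$; this is immediate from the construction of the target process~\eqref{eq:process_target}, since the $\mathbf{X}$-components of $\mathbf{Y}$ are identical to $\mathbf{X}$ by definition.

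There is no real obstacle here: the statement is essentially a bookkeeping lemma saying that lifting the prediction task to include ego actions can only make JPRs larger in the state coordinates, so any coverage guarantee obtained on the richer space transfers to the original space. The most important thing to get right in the writeup is clarity about what is a projection of a set versus a projection (marginalisation) of a random vector, so the reader sees that nothing more than monotonicity of $\mathbb{P}$ is being invoked.
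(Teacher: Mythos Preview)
Your proposal is correct and follows essentially the same argument as the paper: both rely on the pointwise implication $(x',a)\in C^+_\alpha(x)\Rightarrow x'\in C'_\alpha(x)$ and then pass to probabilities. The paper phrases this via indicator functions $f\le g$ and integrates out the action variable explicitly, whereas you invoke event inclusion and monotonicity of $\mathbb{P}$ directly; the content is identical.
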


\begin{proof}
For simplicity, we denote $\mathbf{X}_{1\ldots H}$ by $X$, $\mathbf{X}_{H+1\ldots T}$ by $X'$, and $\mathbf{A}_{H+1\ldots T,e}$ by $A$. So, we can define the coverage of $(X,X',A)$ (the augmented sequence) w.r.t.\ the prediction region $C_{\alpha}^+(X)$ by $\mathbb{E}_{X,X',A}[f(X,X',A)]$, where $f$ is the indicator function telling if $(X',A) \in C_{\alpha}^+(X)$. By the premises of the proposition, we have $\mathbb{E}_{X,X',A}[f(X,X',A)]\geq 1-\alpha$. So, we need to prove that $\mathbb{E}_{X,X'}[g(X,X')]\geq \mathbb{E}_{X,X',A}[f(X,X',A)]$ where $g$ is the indicator function telling if $X' \in \Pi_{x'}(C_{\alpha}^+(X))$. We have the following derivation:

\vspace{-.5em}
\[
\small
\begin{array}{l}
\mathbb{E}_{X,X',A}[f(X,X',A)]\\
= {\displaystyle \int f(x,x',a) \cdot \mathbb{P}_{X,X',A}(x,x',a) \mathrm{d}x,\mathrm{d}x',\mathrm{d}a}\\
\leq  {\displaystyle \int g(x,x') \cdot \mathbb{P}_{X,X',A}(x,x',a) \mathrm{d}x,\mathrm{d}x',\mathrm{d}a}\\
= {\displaystyle \int g(x,x') \cdot \left(\int \mathbb{P}_{X,X',A}(x,x',a) \mathrm{d}a\right) \mathrm{d}x,\mathrm{d}x'}\\
= {\displaystyle \int g(x,x') \cdot \mathbb{P}_{X,X'}(x,x') \mathrm{d}x,\mathrm{d}x'}
=\mathbb{E}_{X,X'}[g(X,X')],
\end{array}
\]
where the first inequality holds because for any $x$,$x'$, and $a$, we have that $f(x,x',a)\leq g(x,x')$ (because, by definition, $(x',a)\in  C_{\alpha}^+(x) \rightarrow x' \in \Pi_{x'}(C_{\alpha}^+(x))$).
\end{proof}

Moving our focus to a more complex prediction task may seem counter-intuitive, but it allows us to use a precise definition of DR, as we will see next.

\paragraph{Density ratio computation} \label{sec:density-ratio}
Similarly to COPP, our change of policy causes a shift in the distribution of $\sequence{\mathbf{Y}}{H+1}{T} \mid \sequence{\mathbf{X}}{1}{H}$, while the distribution of $\sequence{\mathbf{X}}{1}{H}$ stays the same. Hence, we have the following derivation for the DR:
\begin{equation}\label{eq:ma-copp-dr}
\begin{split}
    & w(\sequence{\mathbf{x}}{1}{H},\sequence{\mathbf{y}}{H+1}{T}) = \dfrac{\mathrm{d}P^{\pi^{*}}_{\sequence{\mathbf{Y}}{H+1}{T} \mid \sequence{\mathbf{X}}{1}{H}}(\sequence{\mathbf{x}}{1}{H},\sequence{\mathbf{y}}{H+1}{T})}{\mathrm{d}P^{\pi^{b}}_{\sequence{\mathbf{Y}}{H+1}{T} \mid \sequence{\mathbf{X}}{1}{H}}(\sequence{\mathbf{x}}{1}{H},\sequence{\mathbf{y}}{H+1}{T})} \\
    &= \left(\dfrac{\prod_{t=H}^{T-1} \prod_{e \in E} P_e(x_{e,t+1}\mid \mathbf{x}_t,a_{e,t})\pi^*_{e}(a_{e,t}\mid \mathbf{x}_t)}{\prod_{t=H}^{T-1} \prod_{e \in E} P_e(x_{e,t+1}\mid \mathbf{x}_t,a_{e,t})\pi^b_{e}(a_{e,t}\mid \mathbf{x}_t)} \right. \times \\
    & \qquad\quad \left. \dfrac{\prod_{t=H}^{T-1}\prod_{k \not\in E} \int P_k(x_{k,t+1}\mid \mathbf{x}_t,a_k)\pi^b_{k}(a_k\mid \mathbf{x}_t) \mathrm{d}a_k}{\prod_{t=H}^{T-1}\prod_{k \not\in E} \int P_k(x_{k,t+1}\mid \mathbf{x}_t,a_k)\pi^b_{k}(a_k\mid \mathbf{x}_t) \mathrm{d}a_k} \right) \\
    &= \dfrac{\prod_{t=H}^{T-1} \prod_{e \in E} \pi^*_{e}(a_{e,t}\mid \mathbf{x}_t) }{\prod_{t=H}^{T-1} \prod_{e \in E} \pi^b_{e}(a_{e,t}\mid \mathbf{x}_t)}
\end{split}
\end{equation}

Owing to the reformulation of the prediction task, we can now express the DR in terms of the ego agents' policies only—which are known. This means that we can compute the DR precisely. Without such a reformulation, we would need to estimate the agent actions and transition probabilities jointly, as in \eqref{eq:copp_DR}, leading to an approximate DR.

Note that, for $w$ to be well-defined, the likelihood of $(\sequence{\mathbf{x}}{1}{H},\sequence{\mathbf{y}}{H+1}{T})$ cannot be zero in both behavioural and target distributions. This assumption holds, for instance, when for any state $\mathbf{x}$, $\pi_e^b(\cdot \mid \mathbf{x})$ has full support on $\mathcal{A}_e$ for any ego agent $e\in E$, and $P_k(\cdot \mid \mathbf{x},a_{k})$
has full support on $\mathbb{R}^n$ for any agent $k$ and action $a_k\in \mathcal{A}_k$\footnote{Full support can be ensured, for instance, by adding Gaussian noise to the outputs of $\pi_e^b$ and $P_k$.}. 

\paragraph{Non-conformity score} \label{sec:non-conformity-score}
To define the score function $S(\sequence{\mathbf{x}}{1}{H},\sequence{\mathbf{y}}{H+1}{T})$, we need to establish a predictor $\hat{T}$ first. In our case, $\hat{T}$ is a multivariate regression function $\hat{T}: \sequence{\mathbf{x}}{1}{H} \mapsto \sequence{\hat{\mathbf{x}}}{H+1}{T}$ mapping a state sequence $\sequence{\mathbf{x}}{1}{H}$ into an estimate of its continuation $\sequence{\hat{\mathbf{x}}}{H+1}{T}$\footnote{Note that we do not require $\hat{T}$ to predict actions but only states. Our final aim is to construct JPRs for state sequences and so actions should not affect the score $S(\sequence{\mathbf{x}}{1}{H},\sequence{\mathbf{y}}{H+1}{T})$, that is, the deviation between predictions and ground truth. Actions are only involved in the reweighting of $\hat{F}$ (through the DR~\eqref{eq:ma-copp-dr}).}. 
We choose the score function recently proposed in~\cite{cleaveland2023conformal}, which is designed for time series prediction tasks. This score describes the deviation between two (multi-dimensional) trajectories in terms of the maximum deviation across all time points. In particular, for an arbitrary choice of $\gamma_{H+1},\ldots,\gamma_{T}>0$, $S$ is defined as follows:
\begin{equation}\label{eq:non-conf-score}
\begin{split}
    & S(\sequence{\mathbf{x}}{1}{H}, \sequence{\mathbf{y}}{H+1}{T}) =\\
    & \max_{t=H+1 \ldots T}\left\{\gamma_t \left\| \sigma \circ \left( \hat{T}(\sequence{\mathbf{x}}{1}{H}) - \sequence{\mathbf{y}}{H+1}{T} \right) \right\|_2 \right\},
\end{split}
\end{equation}

\noindent where $\hat{T}(\sequence{\mathbf{x}}{1}{H})$ (resp., $\sequence{\mathbf{y}}{H+1}{T}$) is the global state at time $t$ according to the prediction (resp., the output $\sequence{\mathbf{y}}{H+1}{T}$), and $\sigma$ includes normalisation constants for each dimension. 
The $\gamma_t$ parameters determine how much the residuals at different time points contribute to the score. Since residuals tend to grow as the prediction horizon increases, without such parameters (i.e., if  $\gamma_t=1$ for all $t$), the score would be dominated by prediction errors at time points far ahead in the horizon. Thus, a sensible choice for $\gamma_t$ is any monotonically decreasing series -- in our experiments we choose $\gamma_t=(t-H)^{-1}$ -- which mitigates the above issue by assigning more importance to prediction errors in the immediate future. 
Finally, given a test input $\sequence{\mathbf{x}}{1}{H}$, we define the JPR $C^+_{\alpha}$ as follows:

\begin{equation}\label{eq:pred_reg_new}
\begin{split}
	C^+_{\alpha}(\sequence{\mathbf{x}}{1}{H}) = \{ &\sequence{\mathbf{y}}{H+1}{T} \mid S(\sequence{\mathbf{x}}{1}{H},\sequence{\mathbf{y}}{H+1}{T}) \\
    & \quad \leq Q_{1-\alpha}(\hat{F}(\sequence{\mathbf{x}}{1}{H},\sequence{\mathbf{y}}{H+1}{T})) \}
\end{split}
\end{equation}

where $\hat{F}(\sequence{\mathbf{x}}{1}{H},\sequence{\mathbf{y}}{H+1}{T})$ is the calibration distribution constructed with behavioural data and reweighted according to the DR $w(\sequence{\mathbf{x}}{1}{H},\sequence{\mathbf{y}}{H+1}{T})$ (see Eq.~\ref{eq:weighted_calib_dist}).

\paragraph{Construction of Prediction Regions}\label{sec:construct-pred-regions} As discussed in Remark~\ref{rem:enumeration}, with distribution shifts (other than covariate shifts), the calibration distribution needs to be reweighted for every candidate output. 
However, enumerating the outputs as in \cite{taufiq_conformal_2022} quickly becomes infeasible as the size of the search space is exponential in the number of output dimensions.

To alleviate this problem, various search strategies could be applied. Then, the prediction region would be constructed by taking some closure (to avoid zero-volume regions) of those visited trajectories that pass the CP test. However, such strategies are incomplete, and the resulting region will necessarily undercover.

Crucially, our MA-COPP approach overcomes the issue of enumerating or searching the output space, based on the following intuition.
Due to the above limitations, we cannot compute precisely $C^+_{\alpha}(\sequence{\mathbf{x}}{1}{H})$, i.e., the true JPR~\eqref{eq:pred_reg_new}. 
However, if we knew the value $w^\top(C^+_{\alpha}(\sequence{\mathbf{x}}{1}{H}))$ of the maximum DR among all trajectories in $C^+_{\alpha}(\sequence{\mathbf{x}}{1}{H})$, then 
we could use the same $w^\top(C^+_{\alpha}(\sequence{\mathbf{x}}{1}{H}))$ for all candidate outputs $\sequence{\mathbf{y}}{H+1}{T}$ when reweighting the calibration distribution. The resulting region would meet the coverage guarantees because it is a conservative approximation of $C^+_{\alpha}(\sequence{\mathbf{x}}{1}{H})$ and, importantly, can be constructed implicitly without enumerating the output space because all outputs have now the same critical value, as per Remark~\ref{rem:implicit_region}. 
Before introducing this statement more formally, let us denote with $\widehat{F}(w)$ the calibration distribution $\widehat{F}$ reweighted by $w \in (0,\infty)$. 

\begin{proposition}[Max-DR region]\label{prop:maxdr}
Let $\mathcal{Y} \subseteq C^+_{\alpha}(\sequence{\mathbf{x}}{1}{H})$ be a subset of the JPR~\eqref{eq:pred_reg_new}. Let $w^{\top}(\mathcal{Y}) = \underset{\sequence{\mathbf{y}}{H+1}{T} \in \mathcal{Y}}{\max} w(\sequence{\mathbf{x}}{1}{H}, \sequence{\mathbf{y}}{H+1}{T})$ be the maximum DR within $\mathcal{Y}$. Define the \emph{max-DR region} as $C^+_{\alpha}(\sequence{\mathbf{x}}{1}{H}, w^{\top}(\mathcal{Y}))= \{ \sequence{\mathbf{y}}{H+1}{T} \mid S(\sequence{\mathbf{x}}{1}{H},\sequence{\mathbf{y}}{H+1}{T}) \leq Q_{1-\alpha}(\hat{F}(w^{\top}(\mathcal{Y}))) \}$. Then, $\mathcal{Y} \subseteq C^+_{\alpha}(\sequence{\mathbf{x}}{1}{H}, w^{\top}(\mathcal{Y}))$.
\end{proposition}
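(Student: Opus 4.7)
The plan is to reduce the inclusion $\mathcal{Y} \subseteq C^+_{\alpha}(\sequence{\mathbf{x}}{1}{H}, w^{\top}(\mathcal{Y}))$ to a monotonicity property of the reweighted quantile $Q_{1-\alpha}(\widehat{F}(w))$ as a function of the test-point weight $w$. Fix an arbitrary $\sequence{\mathbf{y}}{H+1}{T} \in \mathcal{Y}$. Because $\mathcal{Y} \subseteq C^+_{\alpha}(\sequence{\mathbf{x}}{1}{H})$, the definition \eqref{eq:pred_reg_new} of the true JPR gives $S(\sequence{\mathbf{x}}{1}{H},\sequence{\mathbf{y}}{H+1}{T}) \leq Q_{1-\alpha}(\widehat{F}(w(\sequence{\mathbf{x}}{1}{H},\sequence{\mathbf{y}}{H+1}{T})))$. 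The goal is to upgrade the right-hand side by replacing $w(\sequence{\mathbf{x}}{1}{H},\sequence{\mathbf{y}}{H+1}{T})$ with $w^{\top}(\mathcal{Y})$, which by definition satisfies $w(\sequence{\mathbf{x}}{1}{H},\sequence{\mathbf{y}}{H+1}{T}) \leq w^{\top}(\mathcal{Y})$.

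The core step, and the one I expect to be the main obstacle, is establishing that $w \mapsto Q_{1-\alpha}(\widehat{F}(w))$ is monotonically non-decreasing on $(0,\infty)$. Recall from \eqref{eq:weighted_calib_dist} that, writing $w_i = w(x^{(i)},y^{(i)})$ and $W = \sum_i w_i$,
\[
\widehat{F}(w) = \sum_{i=1}^{|\mathcal{D}_c|} \frac{w_i}{W+w}\delta_{s_i} + \frac{w}{W+w}\delta_{\infty}.
\]
For any finite threshold $q$, the CDF value $\widehat{F}(w)(-\infty, q] = \sum_{i:\, s_i \leq q} \frac{w_i}{W+w}$ is non-increasing in $w$, since $w$ appears only in the denominator. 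Consequently, the generalised inverse $Q_{1-\alpha}(\widehat{F}(w)) = \inf\{q : \widehat{F}(w)(-\infty,q] \geq 1-\alpha\}$ is non-decreasing in $w$: if a threshold fails the $1-\alpha$ requirement at a smaller weight, it also fails at a larger weight, so the infimum can only grow.

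Combining the two ingredients finishes the proof: from $w(\sequence{\mathbf{x}}{1}{H},\sequence{\mathbf{y}}{H+1}{T}) \leq w^{\top}(\mathcal{Y})$ and monotonicity, we obtain
\[
Q_{1-\alpha}(\widehat{F}(w(\sequence{\mathbf{x}}{1}{H},\sequence{\mathbf{y}}{H+1}{T}))) \;\leq\; Q_{1-\alpha}(\widehat{F}(w^{\top}(\mathcal{Y}))),
\]
and chaining with the JPR membership yields $S(\sequence{\mathbf{x}}{1}{H},\sequence{\mathbf{y}}{H+1}{T}) \leq Q_{1-\alpha}(\widehat{F}(w^{\top}(\mathcal{Y})))$, i.e.\ $\sequence{\mathbf{y}}{H+1}{T} \in C^+_{\alpha}(\sequence{\mathbf{x}}{1}{H}, w^{\top}(\mathcal{Y}))$. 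Since $\sequence{\mathbf{y}}{H+1}{T} \in \mathcal{Y}$ was arbitrary, the desired inclusion follows. A minor technicality to flag is that $w^{\top}(\mathcal{Y})$ is defined as a maximum, which is guaranteed to be attained only if $\mathcal{Y}$ is (say) finite or compact with $w$ continuous; otherwise the same argument goes through verbatim with a supremum in place of the maximum, using left-continuity of the generalised inverse.
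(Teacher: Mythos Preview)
Your proposal is correct and follows essentially the same route as the paper: both reduce the inclusion to the monotonicity of $Q_{1-\alpha}(\widehat{F}(w))$ in the test-point weight $w$, and then chain this with the JPR membership inequality. The only difference is cosmetic: the paper justifies monotonicity by noting that the mass $\frac{w}{W+w}$ at $\infty$ is increasing in $w$, whereas you argue the equivalent fact that the CDF at any finite $q$ is non-increasing in $w$; your version is slightly more explicit but the underlying observation is the same.
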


\begin{proof}
    To prove that $\mathcal{Y} \subseteq C^+_{\alpha}(\sequence{\mathbf{x}}{1}{H}, w^{\top}(\mathcal{Y}))$, we show that for any $\sequence{\mathbf{y}}{H+1}{T} \in \mathcal{Y}$, $Q_{1-\alpha}(\hat{F}(\sequence{\mathbf{x}}{1}{H},\sequence{\mathbf{y}}{H+1}{T})) \leq Q_{1-\alpha}(\hat{F}(w^{\top}(\mathcal{Y})))$. This follows from~\eqref{eq:weighted_calib_dist}, since 
    the reweighted probability of the test point (which has $\infty$ score) is always higher in $\hat{F}(w^{\top}(\mathcal{Y}))$, i.e., $\dfrac{w^{\top}(\mathcal{Y})}{W + w^{\top}(\mathcal{Y})} \geq \dfrac{w(\sequence{\mathbf{x}}{1}{H},\sequence{\mathbf{y}}{H+1}{T})}{W + w(\sequence{\mathbf{x}}{1}{H},\sequence{\mathbf{y}}{H+1}{T})}$ for any $\sequence{\mathbf{y}}{H+1}{T} \in \mathcal{Y}$.
\end{proof}

A consequence of Proposition~\ref{prop:maxdr} is that, given the true maximum DR  $w^{\top}(C^+_{\alpha}(\sequence{\mathbf{x}}{1}{H}))$, the corresponding max-DR region is valid (i.e., it has coverage of at least $1-\alpha$) because it contains the true JPR $C^+_{\alpha}(\sequence{\mathbf{x}}{1}{H})$. However, $w^{\top}(C^+_{\alpha}(\sequence{\mathbf{x}}{1}{H}))$ is not known a priori and needs to be estimated. 

Our approach uses search techniques to find an under-approximation $\tilde{w}^\top\leq w^{\top}(C^+_{\alpha}(\sequence{\mathbf{x}}{1}{H}))$ of the true maximum DR. While the resulting max-DR region  $C^+_{\alpha}(\sequence{\mathbf{x}}{1}{H},\tilde{w}^\top)$ may not achieve the target coverage, pivoting the search task over the maximum DR is substantially more effective than doing so over the output space directly -- and this is confirmed by our experiments. The reason is that with our approach, as soon as we find \textit{just one} output trajectory $\sequence{\mathbf{y}}{H+1}{T}$ of the true JPR (i.e., that passes the CP test), then we can use the corresponding DR $w(\sequence{\mathbf{x}}{1}{H},\sequence{\mathbf{y}}{H+1}{T})$ to construct a max-DR region that is guaranteed to include \textit{all} trajectories of the true JPR with a DR below or equal to $w(\sequence{\mathbf{x}}{1}{H},\sequence{\mathbf{y}}{H+1}{T})$ (see Prop.~\ref{prop:max-DR-2} below). On the contrary, without our approach, we would need an extensive (and likely infeasible) search to cover {all} those trajectories sufficiently well. 

\begin{proposition}\label{prop:max-DR-2}
    For $\tilde{w}^\top\in (0,\infty)$, the corresponding max-DR region $C^+_{\alpha}(\sequence{\mathbf{x}}{1}{H}, \tilde{w}^\top)$ contains the JPR subset $\{\sequence{\mathbf{y}}{H+1}{T} \in  C^+_{\alpha}(\sequence{\mathbf{x}}{1}{H}) \mid w(\sequence{\mathbf{x}}{1}{H},\sequence{\mathbf{y}}{H+1}{T})\leq \tilde{w}^\top\}$.
\end{proposition}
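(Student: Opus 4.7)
The plan is to reduce the claim to Proposition~\ref{prop:maxdr} by a monotonicity argument on the critical quantile with respect to the reweighting value. Let me denote for brevity $X := \sequence{\mathbf{x}}{1}{H}$ and write $\mathcal{Y}_{\tilde{w}^\top} := \{\sequence{\mathbf{y}}{H+1}{T}\in C^+_\alpha(X) \mid w(X,\sequence{\mathbf{y}}{H+1}{T}) \le \tilde{w}^\top\}$ for the JPR subset of interest. By construction $\mathcal{Y}_{\tilde{w}^\top} \subseteq C^+_\alpha(X)$, so Proposition~\ref{prop:maxdr} applies and gives $\mathcal{Y}_{\tilde{w}^\top} \subseteq C^+_\alpha(X, w^\top(\mathcal{Y}_{\tilde{w}^\top}))$, where $w^\top(\mathcal{Y}_{\tilde{w}^\top}) \le \tilde{w}^\top$ by definition of $\mathcal{Y}_{\tilde{w}^\top}$.

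The remaining step is to show that the max-DR region is monotone in its second argument, i.e., $C^+_\alpha(X, w_1) \subseteq C^+_\alpha(X, w_2)$ whenever $w_1 \le w_2$. Since the max-DR region is determined by the threshold $Q_{1-\alpha}(\widehat{F}(w))$, it suffices to prove $Q_{1-\alpha}(\widehat{F}(w_1)) \le Q_{1-\alpha}(\widehat{F}(w_2))$. This follows from inspecting~\eqref{eq:weighted_calib_dist}: as $w$ grows, the mass $w/(W+w)$ assigned to the $\delta_\infty$ atom (the worst-case test score) is non-decreasing, while the total mass $\sum_i w(x^{(i)},y^{(i)})/(W+w)$ on the finite calibration scores $s_i$ is non-increasing. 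Hence the cumulative distribution function of $\widehat{F}(w)$ evaluated at any finite threshold is non-increasing in $w$, which in turn implies that the $1-\alpha$ quantile is non-decreasing in $w$.

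Combining these two facts yields the chain of inclusions
\[
\mathcal{Y}_{\tilde{w}^\top} \;\subseteq\; C^+_\alpha\bigl(X, w^\top(\mathcal{Y}_{\tilde{w}^\top})\bigr) \;\subseteq\; C^+_\alpha(X, \tilde{w}^\top),
\]
which is precisely the claim. The only subtle step is the monotonicity of the quantile in $w$; all other ingredients are essentially restatements of Proposition~\ref{prop:maxdr}. I would expect the monotonicity argument to be the main (and only) non-trivial obstacle, but since $\widehat{F}(w)$ is a mixture whose only $w$-dependence is the simple reallocation of probability between the finite atoms $\{s_i\}$ and the $+\infty$ atom, it is essentially a one-line verification from the definition~\eqref{eq:weighted_calib_dist}.
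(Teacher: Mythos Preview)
Your argument is correct and relies on the same key ingredient as the paper's proof, namely the monotonicity of $Q_{1-\alpha}(\widehat{F}(w))$ in $w$ established in the proof of Proposition~\ref{prop:maxdr}. The only difference is packaging: the paper argues directly element-wise (for each $\sequence{\mathbf{y}}{H+1}{T}$ in the subset, membership in the JPR gives $S\le Q_{1-\alpha}(\widehat{F}(w(\cdot)))$, and then $w(\cdot)\le\tilde{w}^\top$ plus quantile monotonicity gives $S\le Q_{1-\alpha}(\widehat{F}(\tilde{w}^\top))$), whereas you first invoke Proposition~\ref{prop:maxdr} on the set $\mathcal{Y}_{\tilde{w}^\top}$ and then use monotonicity of the max-DR region in its second argument.
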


\begin{proof}
    For any $\sequence{\mathbf{y}}{H+1}{T} \in  C^+_{\alpha}(\sequence{\mathbf{x}}{1}{H})$, it holds that $S(\sequence{\mathbf{x}}{H+1}{T},\sequence{\mathbf{y}}{H+1}{T}) \leq Q_{1-\alpha}(\hat{F}(w(\sequence{\mathbf{x}}{1}{H},\sequence{\mathbf{y}}{H+1}{T})))$. As shown in the proof of Prop.~\ref{prop:maxdr}, if $w(\sequence{\mathbf{x}}{1}{H},\sequence{\mathbf{y}}{1}{H})\leq \tilde{w}^\top$, then we have $S(\sequence{\mathbf{x}}{H+1}{T},\sequence{\mathbf{y}}{H+1}{T}) \leq Q_{1-\alpha}(\hat{F}(\tilde{w}^\top))$, i.e., $\sequence{\mathbf{y}}{H+1}{T} \in C^+_{\alpha}(\sequence{\mathbf{x}}{1}{H}, \tilde{w}^\top)$. 
\end{proof}

To estimate the maximum DR, our search technique performs sampling of a \textit{synthetic target process}, an approximation of the target process where the unknown transition probabilities $P_k$ and the non-ego policies ${\pi}^{b}_{k}$ are replaced by data-driven approximations $\hat{P}_k$ and $\hat{\pi}^{b}_{k}$ learned from behavioural data\footnote{For efficiency, one may choose to learn an end-to-end approximation of the non-ego dynamics, i.e., a single predictor mapping $\mathbf{X}_t$ directly into $(X_{k,t+1})_{k\not\in E}$ (thus avoiding to learn the non-ego agents' policies). We define $\hat{\pi}^{b}_{k}$ and $\hat{P}_k$ as isotropic multi-variate Gaussians with parameters predicted by a neural network model.}.

\section{Results}

\subsection{Experimental Settings}
We evaluate our method on two case studies, a collaborative environment with a continuous state space and a discrete action space, and an adversarial environment with a continuous state and action space.
In both case studies, we compare the MA-COPP method against two configurations. 1) $T \rightarrow T$ is a standard CP approach using the true target distribution for both calibration and test data: the gold standard approach where everything is known. 2) $B \rightarrow T$ is standard CP under distribution shift whereby we use the behavioural distribution for the calibration data, and the target distribution at test-time: the standard approach which disregards distribution shifts. For all experiments, we consider the nominal coverage rate $\alpha = 0.95$

\subsection{Case Studies}

\subsubsection{Multi-Particle Environment (MPE)}\label{sec:mpe-results}

We use an MPE environment based on the PettingZoo \cite{terry2021pettingzoo} library. In this collaborative 2D environment with discrete actions, there are $k$ agents and $m$ {\em landmarks}. Landmarks take the form of static circles in the state space. In our experiments, we set $k = m = 3$, with only one ego agent. The goal of the environment is for the agents to cooperatively cover all of the landmarks whilst avoiding collisions with each other.

The state space of the environment is given by the vector:

\begin{equation}\label{eq:state_space}
    \mathbf{x}_{t} = \left( \left( x_{k,t}, v_{k,t} \right)^{K}_{k=1}, l_{1}, \ldots, l_{M} \right)
\end{equation}
where $x_{k,t}$ and $v_{k,t}$ are the position and velocity vectors for each agent at time $t$. Each agent can only observe the position (with additive noise sampled from a Gaussian distribution) of the other agents and so its observation space is defined as follows:
\begin{equation}\label{eq:obs_space}
    y_{k,t} = \left( x_{k,t}, v_{k,t}, \left( x_{j,t} \right)_{j\neq k}, l_{1}, \ldots, l_{M} \right) + w_{k}^{sensor}
\end{equation}

At each time step, the agents sample their actions from a discrete space $\mathcal{A} =$ \textit{\{left, right, up, down, do nothing\}} according to their own stochastic policies, which are parameterized by neural networks. Each action updates the agent's velocity by a constant vector—one for each of the cardinal directions w.r.t. the origin, as well as $[0, 0]$. The transition kernel, $P$, for each agent can be defined as follows:
\begin{equation}\label{eq:motion}
    X_{k,t+1} = X_{k,t} + V_{k,t} + W_k^{act}; \ V_{k,t+1} = V_{k,t} + U_{k,t}
\end{equation}
where $X_k$ and $V_k$ are the position and velocity of agent $k$ and $W_k^{act}$ is a Gaussian noise term modelling actuation noise. The resultant control input $U_{k,t}$ that an agent receives is the unit vector corresponding to the cardinal direction (i.e. $[0,1]^{T}$ if $A_{k,t} = \text{up}$) multiplied a scalar representing the intensity of the acceleration.

\begin{figure*}[t!]
    \centering
    \begin{subfigure}[b]{0.24\textwidth}
        \centering
        \includegraphics[height=1.5in]{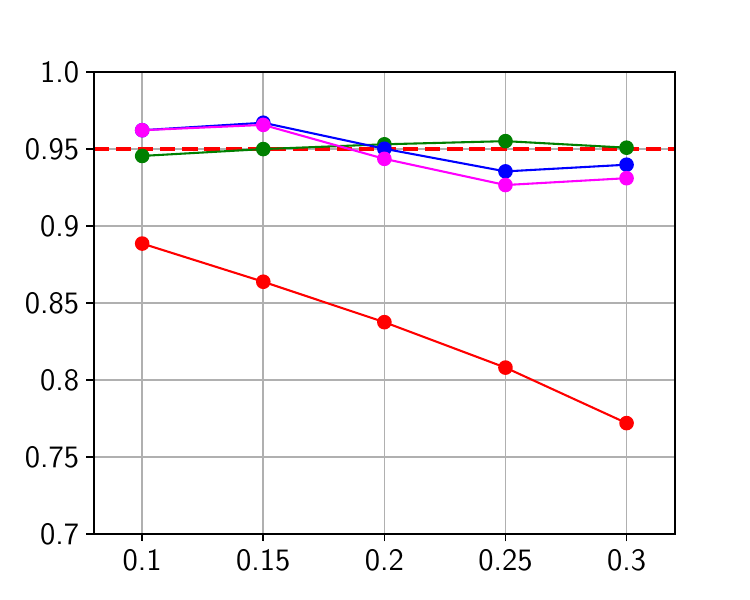}
        \caption{Marginal coverage rates with a prediction length of $8$ timesteps}
        \label{fig:mpe_marginal_8}
    \end{subfigure}%
    ~
    \begin{subfigure}[b]{0.24\textwidth}
        \centering
        \includegraphics[height=1.5in]{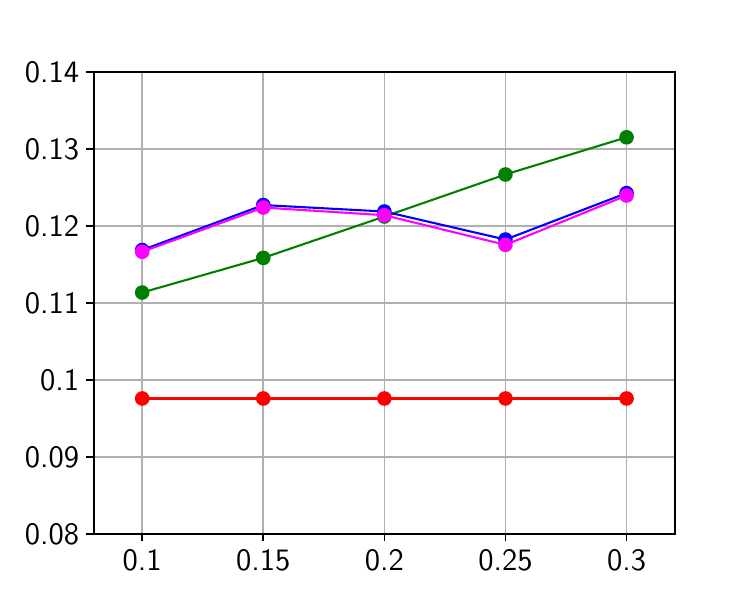}
        \caption{Critical values of JPRs with a prediction length of $8$ timesteps}
        \label{fig:mpe_critical_8}
    \end{subfigure}%
    ~
    \begin{subfigure}[b]{0.24\textwidth}
        \centering
        \includegraphics[height=1.5in]{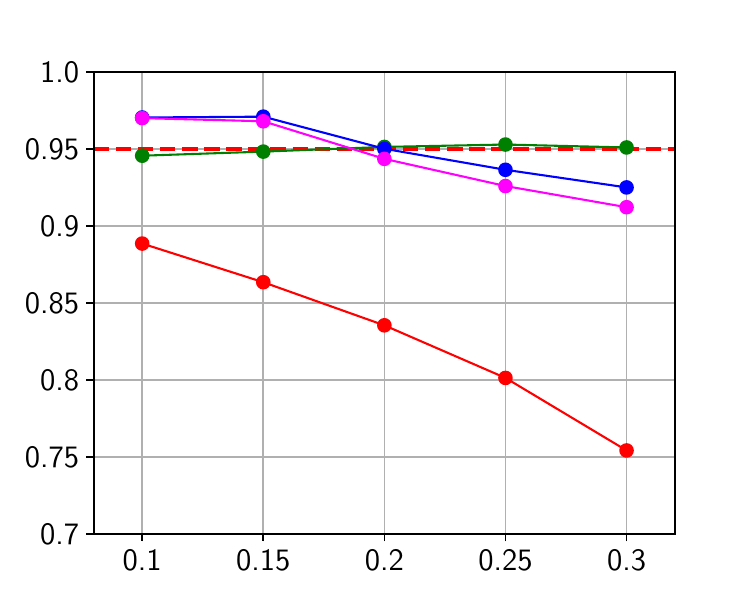}
        \caption{Marginal coverage rates with a prediction length of $12$ timesteps}
        \label{fig:mpe_marginal_12}
    \end{subfigure}%
    ~ 
    \begin{subfigure}[b]{0.24\textwidth}
        \centering
        \includegraphics[height=1.5in]{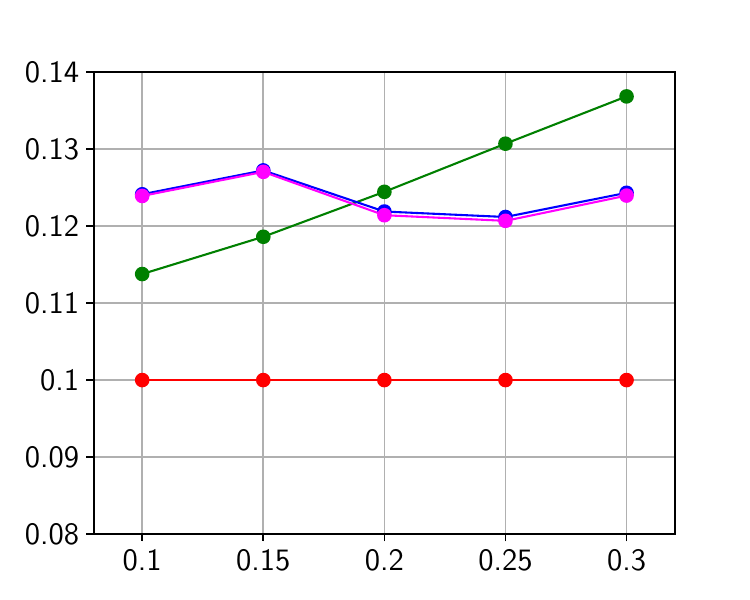}
        \caption{Critical values of JPRs with a prediction length of $12$ timesteps}
        \label{fig:mpe_critical_12}%
    \end{subfigure}%
    \caption{$x$-axis: $\epsilon_{bias}$; Dashed red: nominal coverage level; Red: standard CP $B \rightarrow T$; Green: standard CP $T \rightarrow T$; Blue: MA-COPP with synthetic process $B^{S} \rightarrow T$; Magenta: MA-COPP with true process $B^{T} \rightarrow T$}
\end{figure*}

We define the each behavioural policy as $\epsilon$-greedy, that is, with probability $1-\epsilon_{greedy}$ it selects the true action $A_{k,t}^{(i)}$ or a random action $A_{k,t}^{(j)}, i \neq j$ with probability $\epsilon_{greedy} / |\mathcal{A}| - 1$. The target policies for the ego agents are defined in the same way, with the exception of an additional \textit{bias} term which we use to control the degree of distribution shift. This bias modifies the action probabilities further by selecting a fixed action (in our case, \textit{down}) with probability $\epsilon_{bias}$.  In our experiments, we fix $\epsilon_{greedy}^{b} = 0.1$ and $\epsilon_{greedy}^{*}$, and evaluate our method with various $\epsilon_{bias} \in \{0.1, 0.15, 0.2, 0.25, 0.3\}$.

We generate 3,200 prefixes (1,600--800--800 for training--calibration--test respectively). The training set is used only for learning an LSTM, as required in \ref{sec:non-conformity-score} as well as learning the dynamics models for the synthetic process. For each prefix, we generate 25 Monte-Carlo continuations, both under the behavioural and true target—which we assume to be unknown, but use only for the sake of evaluation. We also sample 25 continuations from the synthetic process. For all of the datasets, we generate $9$ steps and $12$ steps for the prefix and continuations respectively. The prediction regions are formed over the continuation of test prefixes, and are of the dimension $k \times 2 \times (T-H+1)$.

\begin{table}[b]
\begin{tabular}{|c|c|c|c|c|c|}
\hline
\multicolumn{1}{|l|}{\backslashbox{$T$}{$\epsilon$-bias}} & 0.1                       & 0.15                      & 0.2                          & 0.25                         & 0.3                          \\ \hline
8                         & 0                         & 0                         & 0.06625                      & 0.0475                       & 0.0575                       \\ \hline
12                        & \multicolumn{1}{l|}{0.01} & \multicolumn{1}{l|}{0.05} & \multicolumn{1}{l|}{0.06625} & \multicolumn{1}{l|}{0.06375} & \multicolumn{1}{l|}{0.05125} \\ \hline
\end{tabular}\caption{Proportion of test points with unbounded prediction regions}
\end{table}\label{tab:mpe_infinity}
 
We find that the MA-COPP method performs very well with respect to standard CP, and results in useful and tight prediction regions. When evaluating marginal coverage rates in particular, we observe in both figures \ref{fig:mpe_critical_8} and \ref{fig:mpe_marginal_12}, that with a relatively low epsilon bias of $0.1$, the coverage of standard CP breaks down whilst the MA-COPP method continues to provide the 95\% coverage guarantee until $\epsilon = 0.2$. Beyond this point, MA-COPP also experiences to drop in coverage below 95\%. This is caused by an under-approximation of the true max-DR, leading to prediction regions that are too tight and therefore under-cover, albeit to a far lesser degree than standard CP.

In addition to the coverage results, we also observe a corresponding increase in region size as the JPRs grow to allow for more uncertainty. Since we assume that the true process is unknown and must rely on our synthetic process for the search over the max-DR (as described in \ref{sec:construct-pred-regions}), we also evaluate a fourth setting in which we perform the reweighting with the true target distribution ($B^{T} \rightarrow T$). Crucially, we observe that in both figures \ref{fig:mpe_critical_8} and \ref{fig:mpe_critical_12}, we see that using the synthetic data-generating process (instead of the true process) proves to be sufficient for exploring the max-DR value necessary to increase the critical value sufficiently to cover the true target output space. 

As the degree of policy shift increases, the max-DR often becomes very large. This results in test points where the critical value after reweighting is $\infty$. These regions are unavoidable in cases with large shift, however, as shown in Table \ref{tab:mpe_infinity}, only a small proportion of test points have such regions in our MPE experiments.

\subsubsection{F1TENTH}

We also evaluated MA-COPP in a competitive, continuous control racing environment, presented in \cite{okelly_f1tenth_2020}. In the \ftenth environment, agents are in control of racecars, with states $s = \left(x,y,\delta,v,\Psi,\dot{\Psi},\beta \right)\in \mathbb{R}^7$. $x,y$ represent the $x$ and $y$ position of the vehicle center. $\delta$ represents the steering angle, and $v$ the forward velocity of the car. $\Psi, \dot{\Psi}$ denote the vehicle's yaw angle and yaw rate, and $\beta$ gives the vehicle's slip angle at its centre. The actions of these vehicles are also continuous, $a = (v, \delta) \in\mathbb{R}^2$. $v \in [0,6]$ represents the  velocity, and $\delta \in [-0.4,0.4]$ the steering angle. The full model is available in~\cite{althoff_commonroad_2017}. The simulator is run at a frequency of 100Hz.

Three agents, one ego, and two adversaries compete in a race beginning at various locations along a track. The opponent agents operate at $95\%$ of the speed of the ego agent and follow pre-specified racelines that are centerline offsets. The ego agent's nominal trajectory will bring it into collision with the opponents. A collision avoidance algorithm is used to avoid this. We define the composition of these two components as our nominal policy. The behavioural policy is an $\epsilon^b_{greedy}=0.2$ policy as described in Section \ref{sec:mpe-results}. The target policies replicate this approach with probabilities $\epsilon_{greedy}^t = \left\{0.3,0.4,0.5,0.6,0.7\right\}$, representing a shift away from the nominal policy.
In the \ftenth experiments, the data-generating process and the synthetic process used to search for the maximum DR are the same.

We predict agent trajectories four timesteps into the future. $5,077$ prefixes ($2538$--$1270$--$1269$ for training--calibration--test respectively) were generated. For each of the $5,077$ prefixes, $25$ behavioural and target suffixes, as well as $50$ synthetic suffixes each were generated. The prediction is done for each agent position, resulting in predictions with dimension $(3,2,4)$, for 3 agents over 4 timesteps.

Results are presented in Table~\ref{tab:f110-results}. First, we note that standard CP ($B \rightarrow T$) performs worse as $\epsilon$ grows, losing more than $2\%$ coverage at $\epsilon=0.7$. We expect this behaviour as the target distribution shifts away from the behavioural. In contrast, the \frameworkname~ approach compensates for this shift and consistently provides coverage in line with the nominal $95\%$ target. We also examine the gold-standard CP approach ($T \rightarrow T$), which gives us a benchmark for the critical value required to achieve the desired coverage. In all examples, we see that the \frameworkname~ approach has a larger critical value and, therefore, larger regions than the $T \rightarrow T$ approach. While MA-COPP does result in larger regions, it also provides coverage closer to the target coverage. We can see that in the continuous control example presented, MA-COPP succeeds in compensating for policy shift when standard CP approaches experience coverage gaps and that it does so without very conservative regions. When the distribution shift is too great, MA-COPP can provide trivial guarantees, as the density ratio can quickly explode. These results have been omitted for space.

\begin{table}[t]
    \centering
    \begin{tabular}{|c|c|c|c|c|}
    \hline
        \multicolumn{2}{|c|}{Approach} & \multicolumn{1}{c|}{\multirow{2}{*}{T $\rightarrow$ T}} & \multicolumn{1}{c|}{\multirow{2}{*}{B $\rightarrow$ T}} & \multicolumn{1}{c|}{\multirow{2}{*}{MA-COPP}} \\ \cline{1-2}
        \multicolumn{1}{|c|}{$\epsilon$} & Metric                 & & & \\ \hline
        \multirow{2}{*}{0.3}    & Coverage  & 94.22\%   & 94.26\%   & \textbf{95.02\%} \\ \cline{2-5} 
                                & Avg. CV   & 1.689     & 1.692     & 1.742 \\ \hline
        \multirow{2}{*}{0.4}    & Coverage  & 94.45\%   & 94.32\%   & \textbf{94.94\%} \\ \cline{2-5} 
                                & Avg. CV   & 1.701     & 1.692     & 1.738 \\ \hline
        \multirow{2}{*}{0.5}    & Coverage  & 94.24\%   & 93.92\%   & \textbf{94.78\%} \\ \cline{2-5} 
                                & Avg. CV   & 1.714     & 1.692     & 1.754 \\ \hline
        \multirow{2}{*}{0.6}    & Coverage  & 94.39\%   & 93.79\%   & \textbf{95.23\%} \\ \cline{2-5} 
                                & Avg. CV   & 1.731     & 1.692     & 1.796 \\ \hline
        \multirow{2}{*}{0.7}    & Coverage  & 94.16\%   & 92.99\%   & \textbf{95.51\%} \\ \cline{2-5} 
                                & Avg. CV   & 1.766     & 1.692     & 1.867 \\ \hline
    \end{tabular}
    \caption{Results from the \ftenth experiments, with a suffix length of 4 timesteps. The notation X $\rightarrow$ Y is used as shorthand for standard split conformal prediction, fit on the precedent, and evaluated on the consequent.}
    \label{tab:f110-results}
\end{table}

\section{Related Work} 

Conformal prediction (CP)~\cite{vovk2005algorithmic, angelopoulos2021gentle} is a popular uncertainty quantification method, with numerous applications including 
computer vision~\cite{angelopoulos2022image}, language models~\cite{quach2023conformal}, and system verification~\cite{bortolussi2021neural, cairoli2021neural,lindemann2023conformal, cairoli2023conformal}. Recently, conformal prediction has also been used for safe robotic planning and control, see, e.g.,~\cite{lindemann2023safe, dixit2023adaptive, yang2023safe, ren2023robots, yu2023signal, muthali2023multi}.  Our work is also related to time-series forecasting, for which we find a growing body of CP-based approaches such as~\cite{stankeviciute2021conformal,sun2022copula}.

Past COPP work and our work leverage the weighted exchangeability method of~\cite{tibshirani2019conformal} (discussed in more detail in Section~\ref{sec:background}), but other CP-based approaches exist that address the distribution shift problem, such as \textit{adaptive CP}~\cite{gibbs2021adaptive}, an approach that dynamically adjusts the coverage level in an online manner to compensate for observed under/over-coverage under unknown distribution shifts, or 
\textit{robust CP}~\cite{gendler2021adversarially}, which extends CP to handle bounded adversarial input perturbations.

While our work is the first to support multiple agents and prediction of multi-dimensional trajectories, it was inspired by prior COPP methods for single-agent systems such as~\cite{taufiq_conformal_2022} (see Section~\ref{sec:background}), the work of~\cite{foffano2023conformal} which considers dynamic models (Markov Decision Processes) but predicts scalar outcomes (the value of the MDP trajectories), and the subsampling-based approach of \cite{zhang_conformal_2023} which only supports discrete actions and becomes prohibitive for long trajectories. 

\section{Conclusion}

We presented MA-COPP, the first conformal prediction method for reliable off-policy prediction in multi-agent systems. Our approach avoids the output space enumeration that frustrates existing COPP approaches by reweighting (for every test input) the calibration distribution only once, using an estimate of the maximum density ratio. 
We evaluated our method on two case studies, respectively involving discrete and continuous action spaces, demonstrating that MA-COPP succeeds in adjusting the coverage without generating excessively conservative regions, even in cases where standard CP massively undercovers. 
\section{Disclosure of Funding}

This work is supported by the \textit{REXASI-PRO} H-EU project, call HORIZON-CL4-2021-HUMAN-
01-01, grant agreement ID: 101070028; the
Engineering and Physical Sciences Research Council (EPSRC) under Award EP/W014785/2; NSF CCRI 1925587; NSF GRFP award DGE-2236662; and US DoT Safety21 National University Transportation Center.

\bibliographystyle{IEEEtran}
\bibliography{references}

\end{document}